\newtheorem{thm}{Theorem}
\newtheorem{property}{Property}
\newtheorem{prop}[thm]{Proposition}
\newcommand{\utwi}[1]{\mbox{\boldmath $ #1$}}
\newcommand{\Y}{{\utwi{Y}}}
\newcommand{\X}{{\utwi{X}}}
\DeclareMathOperator*{\argmax}{argmax}
\begin{document}
\title{Pruning and Nonparametric\\ Multiple Change Point Detection}

\author{\IEEEauthorblockN{Wenyu Zhang}
\IEEEauthorblockA{Department of Statistical Science\\
Cornell University\\
Email: wz258@cornell.edu}
\and
\IEEEauthorblockN{Nicholas A. James}
\IEEEauthorblockA{Department of Operations Research\\ and Information Engineering\\
Cornell University\\
Email: nj89@cornell.edu}
\and
\IEEEauthorblockN{David S. Matteson}
\IEEEauthorblockA{Departments of Statistical Science\\ and Social Statistics\\
Cornell University\\
Email: matteson@cornell.edu}}


\maketitle

\begin{abstract}
Change point analysis is a statistical tool to identify homogeneity within time series data. We propose a pruning approach for approximate nonparametric estimation of multiple change points. This general purpose change point detection procedure `cp3o' applies a pruning routine within a dynamic program to greatly reduce the search space and computational costs. Existing goodness-of-fit change point objectives can immediately be utilized within the framework. 

We further propose novel change point algorithms by applying cp3o to two popular nonparametric goodness of fit measures: `e-cp3o' uses E-statistics, and `ks-cp3o' uses Kolmogorov-Smirnov statistics. Simulation studies highlight the performance of these algorithms in comparison with parametric and other nonparametric change point methods. Finally, we illustrate these approaches with climatological and financial applications.
\end{abstract}

\section{Introduction}

The analysis of time ordered data, or time series, has become a frequent practice in both academic and industrial settings. 
When analysis is performed it is generally assumed that the data adheres to some form of homogeneity. However, it may not be appropriate, or practical, to apply the same analytical procedure to many different types of time series. The resulting statistical bias from such model misspecification is one of the reasons for the current resurgence of change point analysis, which attempts to partition a time series into homogeneous segments.

A popular approach is to fit the observed data to a parametric model. In this setting a change point corresponds to a change in the monitored parameter(s) \citep{maboudou13,chen11}. Parametric approaches rely heavily upon the assumption that the data behaves according to the predefined distribution model. Otherwise the degree of bias in the obtained results is usually unknown \citep{pitarakis04}. In practice, it is almost always difficult to test for adherence to these assumptions. 

Nonparametric analysis is a natural way to proceed. Since nonparametric approaches make much weaker assumptions than their 
parametric counterparts, they can be used in a much wider variety of settings; for example, the analysis of internet traffic data, where there is no commonly accepted 
distributional model.

In this paper, we address the challenge of designing a customizable procedure that can detect a wide range of changes while appropriately balancing detection accuracy and speed. 
We introduce a new change point search framework called cp3o (\textbf{C}hange \textbf{P}oint \textbf{P}rocedure via \textbf{P}runed \textbf{O}bjectives). 
The cp3o framework is a general purpose search procedure, which means it can be used with a large class of goodness-of-fit metrics to detect change points. For instance, additional knowledge about the data, such as the type of changes which are to be detected, or computational time considerations might direct a user to particular goodness-of-fit metrics. This plug-and-play idea is similar to that in \cite{baranowski16}, such that the users can specify their own goodness-of-fit metrics, or pick from available options based on performance with training data. The cp3o procedure makes use of dynamic programming with search space pruning. This allows the number of change points to be quickly determined, while simultaneously generating all other optimal segmentations as a byproduct.

We further propose two new change point algorithms, named e-cp3o and ks-cp3o, by incorporating two popular nonparametric goodness-of-fit metrics, namely E-statistics and the Kolmogorov-Smirnov statistic, within the cp3o search procedure. 
Results from a variety of simulations show that in most cases the proposed cp3o algorithms provide a good balance between speed and accuracy in comparison with parametric and other nonparametric change point methods.

Both e-cp3o and ks-cp3o algorithms are freely available in the {\tt ecp} {\it R} package on CRAN.

\section{Related Works}

\subsection{Multiple Change Point Detection Methods}

Most existing procedures for performing retrospective multiple change point analysis can be classified as belonging to one of two groups: those that return \emph{approximate} solutions and those that return \emph{exact} solutions. 

Approximate search algorithms tend to rely heavily on a subroutine for finding a single change point. Estimates for multiple change point locations are produced by iteratively applying this subroutine. Examples include binary segmentation and its adaptations such as the Circular Binary Segmentation approach of \cite{olshen04} and the E-Divisive approach of \cite{matteson13}. Approximate procedures tend to produce sub-optimal segmentations of the given time series, but have much lower computational complexity than exact procedures.

Exact search algorithms return segmentations that are optimal with respect to a pre-specified goodness-of-fit metric.
In order to achieve a reasonable computational cost, the utilized goodness-of-fit metrics often satisfy Bellman's Principle of Optimality \citep{bellman52}, and can thus be optimized through the use of dynamic 
programming. Examples of exact algorithms include the Kernel Change Point algorithm, \citep{cappe07} and \citep{arlot12}, and the MultiRank algorithm \citep{fong11}.

\subsection{Pruning Methods}

The runtime of traditional dynamic programming change point detection approaches is still at least quadratic in the length of the time series. However, many of the calculations performed during the dynamic programs do not result 
in the identification of a new change point. These calculations can be viewed as excessive and they quickly compound to slow down analysis. One way to tackle this is by continually pruning the set of potential change point locations. \cite{rigaill10} proposes a 
pruning method that can be used when the goodness-of-fit metric is convex. The PELT method \cite{killick12} is a parametric method which incorporates a pruning step in its dynamic program, such that the expected running time is linear in the length of the time series under certain conditions. However, these methods restrict the options of goodness-of-fit metrics that can be used due to requirements of convexity and parametric objective formulations.

\section{Problem Formulation}
\label{sec: problem}

Let $Z_1, Z_2,\dots, Z_T\in\mathbb R^d$ be a length $T$ sequence of independent 
$d$-dimensional time ordered random variables. We denote $k$ as the true number of change points, where the change points are time indices $1=t_0<t_1<\dots<t_k<t_{k+1}=T+1$, such that 
$Z_i\stackrel{iid}{\sim} F_j$ for $t_j\leq i<t_{j+1}$, and $F_j \neq F_{j+1}$, for distributions $F_j$ with $0\leq j\leq k$.

Given a series of such observations, the challenge is to select the number of change points and change point locations so that the observations within each segment are identically distributed, and the distributions of observations in adjacent segments are different. We approach this problem through the use of goodness-of-fit metrics, which are commonly used for exact search procedures. We also incorporate the parameter $w\geq 1$, which is a user-defined lower bound for the distance between change points.

We refer to a partition of $Z_1, Z_2,\dots, Z_T$ with $\kappa$ segmentation points as a $\kappa$-segmentation. With segmentation points $1 = \tau_0 < \tau_1 < \dots < \tau_\kappa < \tau_{\kappa+1} = T+1$, we quantify the quality of the resulting $\kappa$-segmentation with the empirical goodness-of-fit metric:
\begin{equation*}
    \sum_{j=1}^\kappa 
    \widehat{g}_R\left(\tau_{j-1}, \tau_j, \tau_{j+1}\right)
\end{equation*}

\noindent
where $\widehat{g}_R\left(a, b, c\right) = \widehat{R}\!\left(Z_a^{b-1}, Z_b^{c-1}\right)$ and $Z_a^b=\{Z_i\}_{i=a}^b$, for $a<b<c$. Here $\widehat{R}(\cdot,\cdot)$ is a sample version of a given population measure $R(\cdot,\cdot)$ of the dissimilarity between the distributions of two random variables.

Empirical goodness-of-fit of the $\kappa$-segmentation of a length T sequence with $k$ change points is maximized at 
\begin{align*}
    & \widehat{G}_T(\kappa, w) = 
    \max_{\substack{\tau_1,\tau_2,\dots,\tau_\kappa \\ 
    \tau_i+w\leq\tau_\ell, \ i<\ell \\
    \tau_i \in \{1+w,\dots,T-w+1\}}}
    \sum_{j=1}^{\kappa}
    \widehat{g}_R\left(\tau_{j-1}, \tau_j, \tau_{j+1}\right)
\end{align*}
Calculating $\widehat{G}_T(\kappa, w)$ requires maximization over all $\kappa$-tuples containing a strictly increasing sequence of elements from $1+w$ to $T-w+1$ (that are at least $w$ apart), and hence is computationally expensive. We next introduce an approximation procedure that gives significant speed-ups.

\section{Proposed cp3o Procedure}
\label{sec: procedure}

We adapt the evaluation of $\widehat{G}_T(\kappa, w)$ in two ways to increase computational efficiency:
\begin{itemize}
    \item Approximation of $\widehat{G}_T(\kappa, w)$ to allow the use of dynamic programming,
    \item Pruning to reduce the dynamic program search space.
\end{itemize}

To obtain estimates $\kappa$ and $\{\tau_i\}_{i=1}^\kappa$ for the number of change points $k$ and the change point locations $\{t_i\}_{i=1}^k$, we can calculate $\widehat{G}_T(\kappa, w)$ for a range of values $1\leq \kappa \leq K$ where $K\geq k$ is a user-defined upper bound for $k$, then select $\kappa$ based on a chosen rule which we propose in Section \ref{sec: algorithm}.

\subsection{Dynamic Programming}
\label{sec: dp}

Since there are $\mathcal{O}(T^\kappa)$ possible $\kappa$-segmentations, a direct computation of $\widehat{G}_T(\kappa, w)$ requires $\mathcal{O}(T^\kappa)$ evaluations of the goodness-of-fit metric. 
Instead, we employ dynamic programming in the following fashion. Define
\begin{align*}
&H_t(\kappa, w, \tau) = \widetilde{G}_{\tau-1}(\kappa-1, w) +
    \widehat{g}_R\left( A_{\tau-1}(\kappa-1), \tau, t \right).
\end{align*}

Then, in the $\kappa^{th}$ iteration, for each subsequence $\{Z_i\}_{i=1}^t$ where $1\leq t \leq T$, we define
\begin{align*}
    &A_t(\kappa, w) =\argmax_{\tau\in \{1+\kappa^*w,\dots,t-w+1\}} H_t(\kappa, w, \tau),\\
    &\widetilde{G}_t(\kappa, w) =
    \max_{\tau\in \{1+\kappa^*w,\dots,t-w+1\}} H_t(\kappa, w, \tau),
\end{align*}

\noindent
where $\widetilde{G}_t(\kappa, w)$ denotes the approximation of the optimal goodness-of-fit for the length $t$ subsequence with $\kappa$ segmentation points, and 
$A_t(\kappa, w)$ denotes the location of the $\kappa^{th}$ segmentation point in this approximation. 

$\widetilde{G}_t(\kappa, w)$ is obtained by optimizing over all possible candidates for the $\kappa^{th}$ segmentation point, and approximating the previous $\kappa-1$ change points through $A$. For example, if $\tau$ is the $\kappa^{th}$ segmentation point, then we take $A_{\tau-1}(\kappa-1)$ as the $(\kappa-1)^{th}$ segmentation point  

Each computation of $\widetilde{G}_t(\kappa, w)$ needs at most $t$ evaluations of the goodness-of-fit metric. Hence there are $\mathcal{O}(T^2)$ evaluations of the goodness-of-fit metric in the $\kappa$'s iteration to obtain $\widetilde{G}_T(\kappa, w)$. This is significantly lower than the $\mathcal{O}(T^\kappa)$ evaluations prior to approximation. 

\subsection{Pruning}
\label{sec: pruning}

In the $\kappa^{th}$ iteration of dynamic programming, $A_t(\kappa, w)$ and $\widetilde{G}_t(\kappa, w)$ require searching for the optimal $\kappa^{th}$ segmentation point of $\{Z_i\}_{i=1}^t$ from candidates $\{1+\kappa^*w,\dots,t-w+1\}$. To cut computations further, we reduce this search space by only searching in $S_t(\kappa, w)$, defined below.  

For the first iteration, we initialize $S_t(1, w)=\{1+w,\dots,t-w+1\}$ which is the largest possible search space. For the $(\kappa+1)^{th}$ iteration, the search space $S_t(\kappa+1, w)$ is the result of pruning the search space $S_t(\kappa, w)$ from the previous iteration, as we want to iteratively pinpoint the most optimal change point before $t$.  The pruning rule is:
\begin{align*}
     S_t(\kappa+1, w) & = \{\tau \in S_t(\kappa, w):  \\
    &  H_t(\kappa+1, w, \tau) \geq H_t(\kappa+1, w, t-w+1)\}.
\end{align*}

In the above expression, the inequality compares the goodness-of-fit of two valid $(\kappa+1)$-segmentations for the length $t$ subsequence, one with the last change point at $\tau$ and the other at $t-w+1$. If the former is less than the latter, then $\tau$ is a less optimal segmentation point than $t-w+1$, hence $\tau$ can be pruned away from the set of candidate change points.

We choose to benchmark the goodness-of-fit induced by $\tau$ against that induced by $t-w+1$ because $t-w+1$ is the last possible change point location before $t$. Keeping $t-w+1$ in the search space would maximize the total number of change points possible for the sequence.

\subsection{Algorithm}
\label{sec: algorithm}

We outline the complete cp3o procedure in Algorithm \ref{alg: cp3o}. Aside from the notation described in Sections \ref{sec: dp} and \ref{sec: pruning}, we use $cps_t(\kappa)$ to denote the set of $\kappa$ change points estimated for the subsequence $\{Z_i\}_{i=1}^t$.

\begin{algorithm}
    \caption{cp3o}
    \label{alg: cp3o}
    \SetKwInOut{Input}{Input}
    \SetKwInOut{Output}{Output}
    \SetKwInOut{Initialize}{Initialize}

    \Input{Data sequence $z_1,z_2,\dots,z_T\in \mathbb{R}^d$ \\
    Upper bound on number of changes $K$\\
    Minimum distance between changes $w$\\}
    \Initialize{Search space $S_t(1) = \{1+w,\dots, t-w+1\}$\\
    Set of change points $cps_t(0)=\emptyset$\\
    Previous change point $A_t(0) = 1$ before $t$}
	\For {$\kappa$ from 1 to K}
		{
		\For {t from $2^*w$ to T}
			{
			$\tau^* = \argmax\limits_{\tau\in S_t(\kappa)} H_t(\kappa, w, \tau)$ \\
			$\widetilde{G}_t(\kappa,w) = H_t(\kappa, w, \tau^*)$\\			
			Update $cps_t(\kappa) = cps_{\tau^*}(\kappa-1) \cup \{\tau^*\}$	\\
			Update $A_t(\kappa) = \tau^*$ \\
			Update $S_t(\kappa+1) = \{\tau\in S_t(\kappa): H_t(\kappa+1, w, \tau) \geq H_t(\kappa+1, w, t-w+1)$
			}		
		}
	Pick optimal number of change points $\kappa^*$\\
    \Output{$cps_T(\kappa^*)$}
\end{algorithm}

The cp3o algorithm iterates through $\kappa$ from $1$ to a user-defined upper bound $K$. In each iteration $\kappa$, for each subsequence $\{Z_i\}_{i=1}^t$, cp3o finds $\tau^*$ as the $\kappa^{th}$ change point in the subsequence. The other $\kappa-1$ change points are as found in previous iterations. The goodness-of-fit $\widetilde{G}_t(\kappa,w)$ of the length $t$ subsequence is calculated with these $\kappa$ change points, and $cps_t(\kappa)$ is updated to save these change points. In preparation for future iterations, the search set $S_t(\kappa+1)$ is obtained by discarding candidate points which produce a worse goodness-of-fit than segmenting at $t-w+1$, which is the last possible segmentation point before $t$. Note that in each iteration $\kappa$, when $t=T$ is reached, we obtain estimates for $\kappa$ change points for the entire data sequence. 

We now describe the criteria for picking the optimal number of change points $\kappa^*$. Empirically, $\widetilde{G}_t(\kappa,w)$ usually increases with $\kappa$, and tends to be kinked at $k$. This is expected since partitioning beyond the optimal number of partitions should not increase the goodness-of-fit at the same rate as before. We fit a piecewise linear function with two pieces on the empirical $\widetilde{G}_t(\kappa,w)$ values, and estimate the number of change points to be the $\kappa$ at which the function transitions from one piece to the other. This is similar to techniques used to determine cutoff values from scree plots.

\section{Divergence Metrics}
\label{sec: divergence}

We now offer some guidelines for selecting the divergence metric $R$ and its sample counterpart $\widehat{R}$, then propose two metrics, namely the energy statistic and $\mathcal{A}$-distance, which satisfy these guidelines. The energy statistic and Kolmogorov-Smirnov statistic, a special case of the $\mathcal{A}$-distance, are incorporated into the cp3o procedure. We refer to the two resulting algorithms as e-cp3o and ks-cp3o.

\subsection{Selection Guidelines}
\label{sec: guidelines}

We use the notation $X \stackrel{d}{=} Y$ to mean $X$ and $Y$ are identically distributed.

\begin{property}
\label{prop: convergence} Convergence of empirical divergence to true divergence.
Let $\boldsymbol{X}_n=\{X_i\}_{i=1}^{n}$ and $\boldsymbol{Y}_m=\{Y_j\}_{j=1}^{m}$ be two sets of independent random variables;
$X_i\stackrel{d}{=}X$ for all $i$, and $Y_j\stackrel{d}{=}Y$ for all $j$.
$\widehat{R}\left(\boldsymbol{X}_n,\boldsymbol{Y}_m\right) \xrightarrow{a.s.} R\left(X,Y\right)$ as the sample size $\min\!\left(n,m\right) \rightarrow \infty$, where $R\left(X,Y\right)\geq 0$, and equality holds iff $X \stackrel{d}{=} Y$.
\end{property}

\begin{property}
\label{prop: maximizer}
Single change point detection.
For $0<\gamma<1$, suppose $Z_1,\dots,Z_{\lfloor\gamma T\rfloor} \stackrel{d}{=} X$ and $Z_{\lfloor\gamma T\rfloor+1},\dots,Z_T \stackrel{d}{=} Y$ for a sequence of any length $T$. For $0<\eta<1$, let $A\left(\eta\right)=\{Z_i\}_{i=1}^{\lfloor\eta T\rfloor}$ and $B\left(\eta\right)=\{Z_j\}_{j=\lfloor\eta T\rfloor+1}^T$.
$\widehat{R}\left(A(\eta),B(\eta)\right) \xrightarrow{a.s.} \Theta_0^1(\eta|\gamma)R(X,Y)$ as $T\to\infty$, where $\Theta_0^1(\eta|\gamma)$ maps from the interval $(0,1)$ to $\mathbb{R}$, and has a unique maximizer at $\eta=\gamma$.	
\end{property}

Property \ref{prop: convergence} concerns the convergence of the empirical divergence to the true divergence metric. It is reasonable to enforce that the non-negative divergence be $0$ when applied on two identically distributed random variables. Property \ref{prop: maximizer} implies that for a large enough sample size with one change point, the empirical divergence metric will attain its maximum value when the estimated change point location $\eta$ and true change point location $\gamma$ coincide.

\subsection{Energy Statistics}
\label{sec: estat}

The E-statistics introduces by \cite{rizzo05} are indexed by $\alpha\in(0,2)$ and allows for the detection of \textit{any} type of distributional change\footnote{If the detection of only mean changes is desired $\alpha=2$ is used.}.
For a given $\alpha$, the only distributional assumption made is that the
observations have finite absolute $\alpha^{th}$ moments.

Suppose $\boldsymbol{X}_n = \{X_i\}_{i=1}^n$ and $\boldsymbol{Y}_m = \{Y_j\}_{j=1}^m$ are iid samples from distributions with probability measures $F_X$ and $F_Y$, respectively. Then the population distance is 
\begin{equation*}
    \mathcal E(X,Y|\alpha)=2E|X-Y|^\alpha - E|X-X'|^\alpha - E|Y-Y'|^\alpha.
\end{equation*}
This is equivalent to
$$\mathcal{D}(X,Y|\alpha) = \int_{\mathbb R^d}|\phi_X(t)-\phi_Y(t)|^2\omega(t|\alpha)\,dt$$
with an appropriately chosen positive weight function $\omega$, where $\phi_X$ and $\phi_Y$ are the characteristic functions associated with distributions $F_X$ and $F_Y$, respectively.

The empirical counterpart to $\mathcal E(X,Y|\alpha)$ is
\begin{align*}
    & \widehat{\mathcal E}(\X_n,\Y_m|\alpha) =  \frac{2}{mn}\sum_{i=1}^n\sum_{j=1}^m|x_i-y_j|^\alpha \\
    & \quad - \binom{n}{2}^{-1}\!\!\!\!\sum_{1\le i<j\le n}\!\!\!\!|x_i-x_j|^\alpha
    -\binom{m}{2}^{-1}\!\!\!\!\sum_{1\le i<j\le m}\!\!\!\!|y_i-y_j|^\alpha
\end{align*}

Let $\gamma$ denote the proportion of observations from $F_X$ in the limit as $\min(n,m) \rightarrow\infty$. Then we define our divergence metrics as
\begin{align*}
R(X,Y|\alpha) &= \gamma(1-\gamma)\mathcal{E}(X,Y|\alpha), \\
\widehat{R}(\boldsymbol{X}_n,\boldsymbol{Y}_m|\alpha) &= \frac{mn}{(m+n)^2}\widehat{\mathcal E}(\X_n,\Y_m|\alpha).
\end{align*}

\begin{prop}
Properties \ref{prop: convergence} and \ref{prop: maximizer} are satisfied by the divergence metric based on the E-statistic.
\end{prop}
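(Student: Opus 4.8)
The plan is to verify the two properties separately, treating the three $U$-statistic terms in $\widehat{\mathcal E}$ as the common building block. For Property~\ref{prop: convergence}, I would first note that under the finite $\alpha$-th absolute moment assumption the kernel $|x-y|^\alpha$ is integrable (by $E|X-Y|^\alpha \le 2^\alpha(E|X|^\alpha + E|Y|^\alpha)<\infty$), so the strong law of large numbers for (two-sample) $U$-statistics applies termwise:
\[
\frac{1}{nm}\sum_{i,j}|X_i-Y_j|^\alpha \xrightarrow{a.s.} E|X-Y|^\alpha,
\]
and likewise the within-sample averages converge to $E|X-X'|^\alpha$ and $E|Y-Y'|^\alpha$. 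Hence $\widehat{\mathcal E}(\X_n,\Y_m|\alpha)\xrightarrow{a.s.}\mathcal E(X,Y|\alpha)$. Since $n/(m+n)\to\gamma$ and $m/(m+n)\to 1-\gamma$ by definition of $\gamma$, the normalizing factor satisfies $\tfrac{mn}{(m+n)^2}\to\gamma(1-\gamma)$, giving $\widehat{R}\xrightarrow{a.s.}\gamma(1-\gamma)\mathcal E(X,Y|\alpha)=R(X,Y|\alpha)$.

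The non-negativity and the ``equality iff $X\stackrel{d}{=}Y$'' clause I would obtain not from $\mathcal E$ directly but from its frequency-domain representation $\mathcal D(X,Y|\alpha)=\int_{\mathbb R^d}|\phi_X(t)-\phi_Y(t)|^2\omega(t|\alpha)\,dt$ stated in Section~\ref{sec: estat}: the integrand is non-negative and $\omega>0$, so $\mathcal E=\mathcal D\ge 0$, with the integral vanishing iff $\phi_X=\phi_Y$ almost everywhere, which by continuity of characteristic functions forces $\phi_X\equiv\phi_Y$ and hence $X\stackrel{d}{=}Y$. As $\gamma(1-\gamma)>0$ for $0<\gamma<1$, the same dichotomy transfers to $R$.

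For Property~\ref{prop: maximizer}, the key observation is that for a split at $\eta$ exactly one of $A(\eta),B(\eta)$ is a pure sample and the other is a concatenation of a block of $X$'s and a block of $Y$'s in a fixed proportion. Taking $\eta\le\gamma$ (the case $\eta\ge\gamma$ is symmetric), $A(\eta)$ is pure $X$ while $B(\eta)$ contains a fraction $p=\tfrac{\gamma-\eta}{1-\eta}$ of $X$'s and $1-p=\tfrac{1-\gamma}{1-\eta}$ of $Y$'s. I would split each $U$-statistic sum over $B(\eta)$ (and the cross sum) into $X$--$X$, $X$--$Y$, and $Y$--$Y$ blocks, apply the termwise SLLN to each block, and track the limiting block fractions ($p^2$, $2p(1-p)$, $(1-p)^2$ for the within-$B$ sum). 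Writing $a=E|X-X'|^\alpha$, $b=E|Y-Y'|^\alpha$, $c=E|X-Y|^\alpha$, the coefficients of $a,b,c$ collapse to $-(1-p)^2$, $-(1-p)^2$, $2(1-p)^2$, so that $\widehat{\mathcal E}(A(\eta),B(\eta)|\alpha)\xrightarrow{a.s.}(1-p)^2(2c-a-b)=\tfrac{(1-\gamma)^2}{(1-\eta)^2}\mathcal E(X,Y|\alpha)$. Combined with $\tfrac{mn}{(m+n)^2}\to\eta(1-\eta)$, this yields $\widehat R\to\tfrac{\eta(1-\gamma)^2}{1-\eta}\mathcal E$, i.e. $\Theta_0^1(\eta|\gamma)=\tfrac{\eta(1-\gamma)}{\gamma(1-\eta)}$ for $\eta\le\gamma$ and, symmetrically, $\tfrac{\gamma(1-\eta)}{\eta(1-\gamma)}$ for $\eta\ge\gamma$.

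It then remains to check that $\Theta_0^1(\cdot|\gamma)$ is maximized uniquely at $\eta=\gamma$: both branches equal $1$ there, the first is strictly increasing and the second strictly decreasing (since $\eta(1-\gamma)<\gamma(1-\eta)\iff\eta<\gamma$), so $\Theta_0^1(\eta|\gamma)<1$ for $\eta\ne\gamma$. I expect the main obstacle to be the algebraic bookkeeping of the block-fraction expansion---getting the $a,b,c$ coefficients to cancel into the clean $(1-p)^2$ multiple of $\mathcal E$---together with making the per-block SLLN rigorous (the blocks grow proportionally to $T$, so each block average converges almost surely); the maximizer verification is then elementary.
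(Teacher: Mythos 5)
Your argument is correct and reaches the same limit function as the paper, but by a different route for Property~\ref{prop: maximizer}. Where you derive the limit of $\widehat{\mathcal E}(A(\eta),B(\eta)\mid\alpha)$ from scratch---splitting each $U$-statistic sum into $X$--$X$, $X$--$Y$, $Y$--$Y$ blocks, applying the strong law per block, and collapsing the coefficients of $a,b,c$ to the factor $(1-p)^2$---the paper simply invokes Theorem~1 of \cite{matteson13}, which packages exactly this limit as $\widehat{\gamma}(1-\widehat{\gamma})\,h(\widehat{\gamma};\gamma)\,\mathcal E(X,Y|\alpha)$ with $h(\widehat{\gamma};\gamma)=\bigl(\tfrac{\gamma}{\widehat{\gamma}}\mathds{1}_{\widehat{\gamma}\ge\gamma}+\tfrac{1-\gamma}{1-\widehat{\gamma}}\mathds{1}_{\widehat{\gamma}<\gamma}\bigr)^2$. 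Your two branches $\tfrac{\eta(1-\gamma)}{\gamma(1-\eta)}$ (for $\eta\le\gamma$) and $\tfrac{\gamma(1-\eta)}{\eta(1-\gamma)}$ (for $\eta\ge\gamma$) are precisely $\tfrac{\widehat{\gamma}(1-\widehat{\gamma})}{\gamma(1-\gamma)}h(\widehat{\gamma};\gamma)$ after simplification, and your monotonicity argument for the unique maximizer at $\eta=\gamma$ is the verification the paper leaves implicit. For Property~\ref{prop: convergence} the two proofs coincide: termwise SLLN for the $U$-statistics plus the characteristic-function representation $\mathcal D=\mathcal E$ for non-negativity and the ``equality iff $X\stackrel{d}{=}Y$'' clause. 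The citation-based proof is shorter and defers the measure-theoretic care to the reference; your block decomposition buys self-containment and makes transparent where the finite $\alpha$-th moment assumption and the proportional block growth enter, at the cost of the algebraic bookkeeping you flag.
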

\begin{proof}
Using the result of \cite[Theorem 1]{matteson13} we have that 
$$\widehat R(A(\widehat{\gamma}), B(\widehat{\gamma})|\alpha)\xrightarrow{a.s.} \widehat{\gamma}(1-\widehat{\gamma})h(\widehat{\gamma};\gamma){\mathcal E}(X,Y|\alpha)$$
where $h(\widehat{\gamma};\gamma) =\left(\frac{\gamma}{\widehat{\gamma}}\mathds{1}_{\widehat{\gamma}\ge \gamma} + \frac{1-\gamma}{1-\widehat{\gamma}}\mathds{1}_{\widehat{\gamma}<\gamma}\right)^2$. Therefore, 
$R(X,Y|\alpha)=\gamma(1-\gamma){\mathcal E}(X,Y)|\alpha)$ and $\Theta_0^1(\widehat{\gamma}|\gamma)=\frac{\widehat{\gamma}(1-\widehat{\gamma})}{\gamma(1-\gamma)}h(\widehat{\gamma};\gamma)$, 
which can be shown to have a unique maximizer at $\widehat{\gamma}=\gamma$.

By definition, $\mathcal{D}(X,Y|\alpha)\geq 0$, with equality if and only if $F_X=F_Y$ by the uniqueness of characteristic functions. The rest of the proof follows from the equality of $\mathcal{D}(X,Y|\alpha)$ and $\mathcal{E}(X,Y|\alpha)$.
\end{proof}

Empirically, we use an incomplete U-statistic version of $\widehat{R}$ to reduce the number of samples needed the compute the pairwise distances. We define a window size $\delta$, within which all pairwise distances are included, and outside which only adjacent points have their pairwise distances included. 
That is,
suppose $\X_n=\{Z_a, Z_{a+1},\dots, Z_{a+n-1}\}$ and $\Y_m=\{Z_{a+n}, Z_{a+n+1},\dots, Z_{a+n+m-1}\},$ 
and define the following sets:
\begin{align*}
	&W_X^\delta = \{(i,j): a+n-\delta\le i<j<a+n\}\cup \\
	&\hspace{3cm} \bigcup_{i=0}^{n-\delta-1}\{(a+i,a+i+1)\}\\
	&W_Y^\delta = \{(i,j): a+n\le i<j<a+n+\delta\}\cup \\
	&\hspace{3cm} \bigcup_{i=\delta-1}^{m-2}\{(a+n+i,a+n+i+1)\}\\
	&B^\delta = (\{a+n-1,\dots,a+n-\delta\}\times \\ &\hspace{3cm} \{a+n,\dots,a+n+\delta-1\})\cup \\
	&\hspace{3cm} \left(\bigcup_{i=\delta+1}^{m\wedge n}\!\!\!\{(a+n-i,a+n+i-1)\}\right)
\end{align*}

The incomplete U-statistic $\widetilde{\mathcal E}$ is then
\begin{align*}
&\widetilde{\mathcal E}(\X_n,\Y_m|\alpha,\delta) = \frac{2}{\#B^\delta}\sum_{(i,j)\in B^\delta}\!\!\!\!\!|X_i-Y_j|^\alpha -\\ 
&\quad \frac{1}{\#W_X^\delta}\sum_{(i,j)\in W_X^\delta}\!\!\!\!\!|X_i-X_j|^\alpha
- \frac{1}{\#W_Y^\delta}\sum_{(i,j)\in W_Y^\delta}\!\!\!\!\!|Y_i-Y_j|^\alpha
\end{align*}

This reduces computation of $\widehat{R}(\X_n,\Y_m|\alpha)$ from $\mathcal{O}\left(n^2 \bigvee m^2\right)$ to $\mathcal{O}\left(\delta^2 + n \bigvee m\right)$. Letting $\delta\leq C\lfloor \sqrt{T} \rfloor$ for some constant $C$ results in a computational complexity of $\mathcal{O}(T)$. Note that $\delta<w$, so we set $\delta=w-1$.
It is shown 
\cite{nasari12} that a strong law of large 
numbers result holds for incomplete U-Statistics, thus the incomplete U-statistic version of $\widehat{R}$ shares the same almost sure limit as $\widehat{R}$.

\subsection{$\mathcal{A}$-distance}

The $\mathcal{A}$-distance is introduced in \cite{kifer04}. It is a generalization of the Kolmogorov-Smirnov statistic, which is often used to quantify the distance between two empirical distribution functions.

We use the same notations as in Section \ref{sec: estat}.
Let $\mathcal{A}$ be a collection of measurable sets from their domain. Then the $\mathcal{A}$-distance is defined as
$$d_\mathcal{A}(F_X,F_Y) = 2\sup_{A\in\mathcal{A}}|F_X(A)-F_Y(A)|$$ 
The empirical $\mathcal{A}$-distance is
$$\widehat{d}(\boldsymbol{X}_n,\boldsymbol{Y}_m|\mathcal{A}) = 2\sup_{A\in \mathcal{A}}\left|\frac{|\boldsymbol{x}_n \cap A|}{n} - \frac{|\boldsymbol{y}_m \cap A|}{m}\right|$$

Let $\gamma$ denote the proportion of observations from $F_X$ in the limit as $\min(n,m) \rightarrow\infty$. Then we define our divergence metrics as
\begin{align*}
    R(X,Y|\mathcal{A}) &= \gamma(1-\gamma)d_\mathcal{A}(F_X,F_Y),\\
    \widehat{R}(\boldsymbol{X}_n,\boldsymbol{Y}_m|\mathcal{A}) &= \frac{mn}{(m+n)^2}\widehat{d}(\boldsymbol{X}_n,\boldsymbol{Y}_m|\mathcal{A}).
\end{align*}
In particular, for $\mathcal{A} = \lbrace (-\infty,r) | r\in\mathbb{R} \rbrace$, $\widehat{d}(\boldsymbol{X}_n,\boldsymbol{Y}_m|\mathcal{A})$ is the Kolmogorov-Smirnov statistic.

\begin{prop}
Property \ref{prop: convergence} is satisfied by the divergence metric based on the $\mathcal{A}$-distance.
\end{prop}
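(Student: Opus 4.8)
The plan is to separate the three requirements of Property~\ref{prop: convergence} --- the almost sure convergence, the non-negativity, and the equality characterization --- and to handle each by exploiting the structure of the $\mathcal{A}$-distance as a supremum of differences of measures. First I would write the empirical set-functions as $\widehat{F}_{X,n}(A) = |\boldsymbol{x}_n\cap A|/n$ and $\widehat{F}_{Y,m}(A) = |\boldsymbol{y}_m\cap A|/m$, so that $\widehat{d}(\boldsymbol{X}_n,\boldsymbol{Y}_m|\mathcal{A}) = 2\sup_{A\in\mathcal{A}}|\widehat{F}_{X,n}(A) - \widehat{F}_{Y,m}(A)|$ and $d_\mathcal{A}(F_X,F_Y) = 2\sup_{A\in\mathcal{A}}|F_X(A)-F_Y(A)|$. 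A double application of the triangle inequality inside the supremum gives
\[
\bigl|\widehat{d}(\boldsymbol{X}_n,\boldsymbol{Y}_m|\mathcal{A}) - d_\mathcal{A}(F_X,F_Y)\bigr| \leq 2\sup_{A\in\mathcal{A}}\bigl|\widehat{F}_{X,n}(A)-F_X(A)\bigr| + 2\sup_{A\in\mathcal{A}}\bigl|\widehat{F}_{Y,m}(A)-F_Y(A)\bigr|,
\]
so that convergence of $\widehat{d}$ to $d_\mathcal{A}$ reduces to controlling each empirical process uniformly over $\mathcal{A}$.

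The central step is therefore a Glivenko--Cantelli argument: $\sup_{A\in\mathcal{A}}|\widehat{F}_{X,n}(A)-F_X(A)| \xrightarrow{a.s.} 0$ as $n\to\infty$, and likewise for $Y$ as $m\to\infty$. For the Kolmogorov--Smirnov choice $\mathcal{A} = \{(-\infty,r): r\in\mathbb{R}\}$ this is exactly the classical Glivenko--Cantelli theorem, so no extra hypothesis is needed; for a general collection $\mathcal{A}$ I would invoke the assumption that $\mathcal{A}$ is a Glivenko--Cantelli class (for instance, of finite Vapnik--Chervonenkis dimension). Since $\min(n,m)\to\infty$ forces both $n\to\infty$ and $m\to\infty$, both suprema vanish almost surely, giving $\widehat{d}\xrightarrow{a.s.} d_\mathcal{A}$. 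Multiplying by the deterministic factor $mn/(m+n)^2 \to \gamma(1-\gamma)$ and applying the product-of-limits rule then delivers $\widehat{R}(\boldsymbol{X}_n,\boldsymbol{Y}_m|\mathcal{A}) \xrightarrow{a.s.} \gamma(1-\gamma)\,d_\mathcal{A}(F_X,F_Y) = R(X,Y|\mathcal{A})$.

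Non-negativity is immediate, since $d_\mathcal{A}$ is a supremum of absolute values and $\gamma(1-\gamma)\geq 0$, hence $R\geq 0$. For the equality characterization I would argue that, with $\gamma\in(0,1)$, $R(X,Y|\mathcal{A})=0$ holds iff $d_\mathcal{A}(F_X,F_Y)=0$, i.e.\ iff $F_X(A)=F_Y(A)$ for every $A\in\mathcal{A}$. The hard part is upgrading this to $X\stackrel{d}{=}Y$: this requires $\mathcal{A}$ to be a \emph{determining} (separating) class, meaning that two probability measures agreeing on every set in $\mathcal{A}$ must coincide. For the Kolmogorov--Smirnov case this is automatic, because agreement of $F_X$ and $F_Y$ on every half-line $(-\infty,r)$ is precisely equality of the distribution functions, which forces $X\stackrel{d}{=}Y$; for a general $\mathcal{A}$ I would assume $\mathcal{A}$ generates the Borel $\sigma$-algebra (or is otherwise separating), which is the natural richness condition and the only genuine obstacle to stating the ``iff'' at full generality.
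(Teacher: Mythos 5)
Your proof is correct, but it reaches the conclusion by a different technical route than the paper. The paper works directly with the two-sample deviation $|d_\mathcal{A}(F_X,F_Y) - \widehat{d}(\boldsymbol{X}_n,\boldsymbol{Y}_m|\mathcal{A})|$, quoting the finite-sample concentration inequality of Kifer et al.\ with the growth function $\pi_\mathcal{A}(2M)$, bounding that via Sauer's Lemma under finite VC dimension, and then summing the resulting tail bounds (a polylogarithm series) to conclude almost sure convergence by Borel--Cantelli. You instead decompose via the triangle inequality into two one-sample uniform deviations $\sup_{A}|\widehat{F}_{X,n}(A)-F_X(A)|$ and $\sup_{A}|\widehat{F}_{Y,m}(A)-F_Y(A)|$ and invoke a Glivenko--Cantelli theorem for each (classical for the half-line class, VC-type for general $\mathcal{A}$); since a finite-VC class is a Glivenko--Cantelli class, the hypotheses are essentially equivalent. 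The paper's route buys an explicit finite-sample rate that could be reused elsewhere; yours is more modular and, in the Kolmogorov--Smirnov case, needs nothing beyond the classical Glivenko--Cantelli theorem. You are also more careful than the paper on the equality characterization: the paper only observes that $F_X = F_Y$ forces the limit to be zero, whereas Property~\ref{prop: convergence} asserts an ``iff,'' whose converse direction requires $\mathcal{A}$ to be a separating (determining) class --- automatic for half-lines, but a genuine additional hypothesis for general $\mathcal{A}$, exactly as you point out. That observation is a real improvement over the paper's argument rather than a deviation from it.
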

\begin{proof}
We note that 
$\widehat{d}(\boldsymbol{X}_n,\boldsymbol{Y}_m|\mathcal{A}) \xrightarrow{a.s.} d_{\mathcal{A}}(F_X,F_Y)$ if $\mathcal{A}$ has a finite VC-dimension.
From \cite{kifer04}, for $M=\min(n,m)$, 
$$\!\!P[|d_A(F_X,F_Y) - \widehat{d}(\boldsymbol{X}_n,\boldsymbol{Y}_m|\mathcal{A})| \geq \epsilon] < \pi_\mathcal{A}(2M)4e^{-M\epsilon^2/4},$$ 
where for domain $D$, $\pi_\mathcal{A}(n) = \max\left\lbrace\ |\lbrace A\cap B:A\in\mathcal{A} \rbrace|: B \subseteq D \text{ and } |B| = n \right\rbrace$. For finite VC-dimension $c$, $\pi_\mathcal{A}(n)<n^c$ by Sauer's Lemma. In particular, $\mathcal{A} = \lbrace (-\infty,r) | r\in\mathbb{R} \rbrace$ has $c=2$. Hence, 
$$P[|d_A(F_X,F_Y) - \widehat{d}(\boldsymbol{X}_n,\boldsymbol{Y}_m|\mathcal{A})| \geq \upsilon] < (2M)^c 4e^{-M\upsilon^2/4}.$$

We then note that for any $\upsilon>0$, $$\sum\limits_{M=1}^\infty (2M)^c 4e^{-M\upsilon^2/4} = 4(2^c)Li_{-c}\left(e^{-\upsilon^2/4}\right) < \infty$$ where $Li_{-c}(x)$ is the polylogarithm function. Hence, $\widehat{d}(\boldsymbol{X}_n,\boldsymbol{Y}_m|\mathcal{A}) \xrightarrow{a.s.} d_{\mathcal{A}}(F_X,F_Y) \geq 0$, and if $F_X=F_Y$, then $\widehat{d}(\boldsymbol{X}_n,\boldsymbol{Y}_m|\mathcal{A}) \xrightarrow{a.s.} 0$.

The proof concludes with noticing $\frac{n}{m+n}\rightarrow \gamma$.
\end{proof}

\begin{prop}
	Property \ref{prop: maximizer} is satisfied by the divergence metric based on the $\mathcal{A}$-distance.
\end{prop}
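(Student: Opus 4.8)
The plan is to directly compute the almost-sure limit of $\widehat{R}(A(\eta),B(\eta)|\mathcal{A})$ and read off $\Theta_0^1(\eta|\gamma)$ from it, mirroring the structure of the energy-statistic proof. The essential observation is that, for large $T$, exactly one of the two candidate segments $A(\eta)$ and $B(\eta)$ straddles the true change point $\lfloor\gamma T\rfloor$ and therefore contains a mixture of $X$- and $Y$-observations, while the other is a pure sample. I would split into the cases $\eta\le\gamma$ and $\eta\ge\gamma$. When $\eta<\gamma$ the left segment $A(\eta)$ is eventually drawn entirely from $F_X$, whereas $B(\eta)$ contains $\lfloor\gamma T\rfloor-\lfloor\eta T\rfloor$ observations from $F_X$ and $T-\lfloor\gamma T\rfloor$ from $F_Y$; the case $\eta>\gamma$ is the mirror image.

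Next I would establish that the empirical measure of the mixed segment converges uniformly over $\mathcal{A}$ to the corresponding convex combination of $F_X$ and $F_Y$. For $\eta<\gamma$ the empirical measure of $B(\eta)$ converges to $\tfrac{\gamma-\eta}{1-\eta}F_X+\tfrac{1-\gamma}{1-\eta}F_Y$ while that of $A(\eta)$ converges to $F_X$, the mixing weights arising from the limiting proportions of $X$- and $Y$-observations in each segment. This uniform convergence is exactly the Glivenko--Cantelli-type statement furnished by the finite VC-dimension bound used in the previous proposition, applied to the two iid sub-blocks making up the mixed segment. Granting it, $\widehat{d}(A(\eta),B(\eta)|\mathcal{A})\xrightarrow{a.s.} d_{\mathcal{A}}(F_X,\tfrac{\gamma-\eta}{1-\eta}F_X+\tfrac{1-\gamma}{1-\eta}F_Y)$. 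The clean step is that, because the second argument is affine in $(F_X,F_Y)$, the integrand collapses: $F_X(A)-[\tfrac{\gamma-\eta}{1-\eta}F_X(A)+\tfrac{1-\gamma}{1-\eta}F_Y(A)]=\tfrac{1-\gamma}{1-\eta}[F_X(A)-F_Y(A)]$, so the constant factors out of the supremum and $\widehat{d}\xrightarrow{a.s.}\tfrac{1-\gamma}{1-\eta}d_{\mathcal{A}}(F_X,F_Y)$. The symmetric calculation gives $\widehat{d}\xrightarrow{a.s.}\tfrac{\gamma}{\eta}d_{\mathcal{A}}(F_X,F_Y)$ when $\eta>\gamma$.

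Finally I would fold in the scaling factor. With $n=\lfloor\eta T\rfloor$ and $m=T-\lfloor\eta T\rfloor$ we have $\tfrac{mn}{(m+n)^2}\to\eta(1-\eta)$, so multiplying through and dividing by $R(X,Y|\mathcal{A})=\gamma(1-\gamma)d_{\mathcal{A}}(F_X,F_Y)$ yields
\begin{equation*}
\Theta_0^1(\eta|\gamma)=
\begin{cases}
\dfrac{\eta}{\gamma}, & 0<\eta\le\gamma,\\[2mm]
\dfrac{1-\eta}{1-\gamma}, & \gamma\le\eta<1.
\end{cases}
\end{equation*}
This is a continuous, piecewise-linear tent function that rises from $0$ to $1$ on $(0,\gamma]$ and falls from $1$ to $0$ on $[\gamma,1)$, hence has its unique maximizer at $\eta=\gamma$, establishing Property \ref{prop: maximizer}.

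The hard part will be the uniform convergence of the empirical measure of the \emph{mixed} segment, since it is not an iid sample from a single distribution but a concatenation of two iid blocks; I would handle this by writing that empirical measure as a convex combination of the two sub-block empirical measures, with data-dependent weights converging to the limiting proportions, and applying the VC-based uniform law to each block before recombining. A minor additional point is to confirm that the floor functions and vanishing boundary effects do not disturb the limiting proportions, which follows since $\lfloor\eta T\rfloor/T\to\eta$.
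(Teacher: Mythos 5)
Your proposal is correct and follows essentially the same route as the paper: compute the almost-sure limit of $\widehat{R}(A(\eta),B(\eta)|\mathcal{A})$, read off $\Theta_0^1(\eta|\gamma)$ as the tent function $\frac{\eta}{\gamma}\mathds{1}_{\eta\le\gamma}+\frac{1-\eta}{1-\gamma}\mathds{1}_{\eta\ge\gamma}$, and conclude it is uniquely maximized at $\eta=\gamma$. The only difference is that you actually derive the limit (via the mixture decomposition of the straddling segment, the affine collapse of the supremum, and the VC-based uniform law applied blockwise), whereas the paper simply asserts it; your limiting factor $g$ and final $\Theta_0^1$ agree exactly with the paper's.
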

\begin{proof}
As $\min(n,m)\rightarrow\infty$,
$$\widehat{R}(A(\widehat{\gamma}),B(\widehat{\gamma})|\mathcal{A}) \xrightarrow{a.s.} \widehat{\gamma}\left( 1-\widehat{\gamma} \right) g\left(\widehat{\gamma};\gamma\right) d_\mathcal{A}(F_X,F_Y)$$ 
where $g\left(\widehat{\gamma};\gamma\right)=\left( \frac{\gamma}{\widehat{\gamma}}\mathds{1}_{\widehat{\gamma}\geq\gamma} + \frac{1-\gamma}{1-\widehat{\gamma}}\mathds{1}_{\widehat{\gamma}<\gamma} \right)$.
Therefore $\Theta_0^1(\widehat{\gamma}|\gamma) = \frac{\widehat{\gamma}\left(1-\widehat{\gamma}\right)}{\gamma(1-\gamma)}g\left(\widehat{\gamma};\gamma\right) = 
\left( \frac{1-\widehat{\gamma}}{1-\gamma}\mathds{1}_{\widehat{\gamma}\geq\gamma} + \frac{\widehat{\gamma}}{\gamma}\mathds{1}_{\widehat{\gamma}<\gamma} \right)$, and it is maximized at $\widehat{\gamma}=\gamma$.
\end{proof}

\section{Simulation Study}
\label{sec: simulation}

To assess the performance of the segmentations, we use Fowlkes and Mallows' adjusted Rand index \citep{fowlkes83}. This value is calculated by 
comparing an estimated segmentation to the true segmentation. The index takes into account both the number of change points as well as their 
locations, and lies in the interval $[0,1]$, where it is equal to $1$ if and only if the two segmentations are identical.

We also include two measures of discrepancy between the true and estimated change point locations as an assessment of estimation accuracy. T2E is the average shortest distance from a true change point to the estimated change points, and E2T is the average shortest distance from an estimated change point to the true change points. A low T2E shows that all the true change points are well-estimated, and a low E2T shows that all the estimated change points are close to true change points.

For each simulation scenario, we apply various methods to 100 randomly generated time series, each with three evenly-spaced change points. 
We compare our methods with E-divisive \citep{ecp14} and NPCP-F \citep{holmes13} (nonparametric, approximate/bisection search), and PELT \citep{killick12} (parametric, exact/dynamic programming search).
All methods were run with their default parameter values unless otherwise specified. For E-Divisive and e-cp3o this corresponds 
to $\alpha=1$. For PELT, e-cp3o and ks-cp3o, the upper bound of number of changes $K$ was set to $5$. For all methods, the minimum segment size was set to approximately $1.5\sqrt{T}$ observations, that is, $w=30,60,90,120$ for time series of length $T=400,1600,3200,6000$, respectively.
All experiments were run on a standard desktop computer.

\subsection{Effects of Pruning}

We demonstrate the effects of the pruning step within the dynamic program on the search space $S_t(\kappa)$ in Figure \ref{fig: heatmap}. The darker the color, the bigger the search space. The search space is pruned significantly within a few iterations.

\begin{figure}
    \centering
    \includegraphics[width=3.25in]{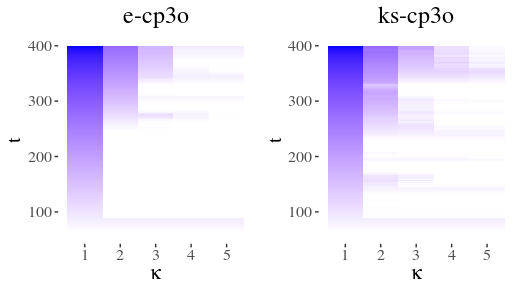}
    \caption{Color represent number of candidate change points in search space $S_t(\kappa)$ at each time index $t$ and iteration $\kappa$. The darker the color, the higher the number. $S_t(\kappa)$ is at it's maximum at $\kappa=1$, and is rapidly pruned in subsequent iterations.}
    \label{fig: heatmap}
\end{figure}

\subsection{Simulation 1}
\label{sec: unisim1}

\begin{figure*}
    \centering
    \begin{subfigure}[b]{\linewidth}
        \includegraphics[width=\linewidth]{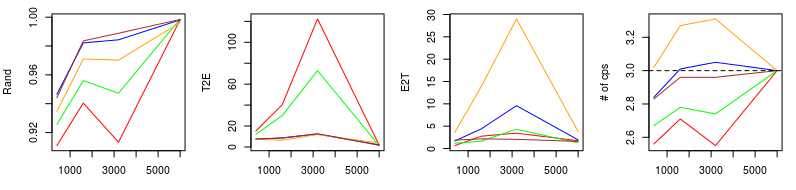}
        \caption{Simulation 1 with mean and variance changes in Gaussian distribution.}
        \label{fig:meanSim}
    \end{subfigure}
    \hspace{\fill}
    \begin{subfigure}[b]{\linewidth}
        \includegraphics[width=\linewidth]{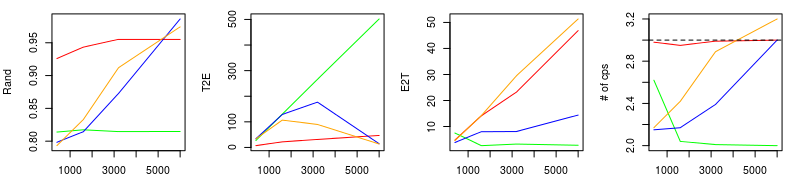}
        \caption{Simulation 2 with distribution, mean and tail changes.}
        \label{fig:tailSim}
    \end{subfigure}
    \hspace{\fill}
	\begin{subfigure}[b]{\linewidth}
       \includegraphics[width=\linewidth]{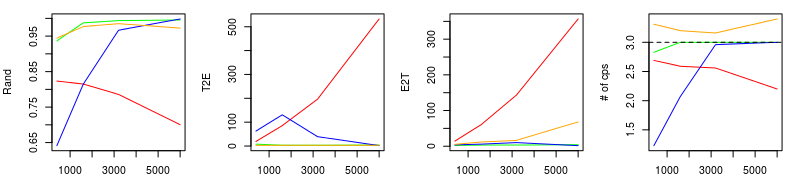}
       \caption{Simulation 3 with changes in t and Cauchy distribution.}
       \label{fig:uniSim}
    \end{subfigure}
	\begin{subfigure}[b]{\linewidth}
	\centering
       \includegraphics[width=0.5\linewidth]{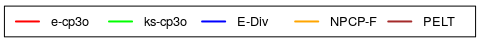}
    \end{subfigure}    
    \caption{Average Rand index, discrepancy values and number of change points detected against length of time series. True number of change points denoted by black dotted line. Good performance is reflected by Rand close to 1, small T2E and E2T, and estimated change point number close to 3.}
    \label{fig:plotSim}
\end{figure*}

This set of simulations consist of independent Gaussian observations which undergo changes in their mean and variance. 
The distribution parameters were chosen so that $\mu_j\stackrel{iid}{\sim}Unif(-10,10)$ and $\sigma^2_j\stackrel{iid}{\sim}Unif(0,5)$. 

\begin{table}
\centering
\resizebox{\columnwidth}{!}{
\begin{tabular}{cccccc}
\hline
\multicolumn{1}{c}{T} & \multicolumn{1}{c}{e-cp3o} & \multicolumn{1}{c}{ks-cp3o} & \multicolumn{1}{c}{E-Div} & \multicolumn{1}{c}{NPCP-F} & \multicolumn{1}{c}{PELT}\\ \hline
400     & 0.119 & 0.918 & 5.285 & 5.475 & 0.002\\ 
1600    & 1.811 & 77.869 & 118.672 & 85.288 & 0.020\\
3200    & 7.977 & 683.003 & 756.503 & 346.632 & 0.087 \\
6000    & 27.855 & 4951.490 & 1841.570 & 1357.562 & 0.342\\\hline
\end{tabular}
}
\caption{\label{table:meanSim}
Average runtimes (s) of the first univariate simulation from Section \ref{sec: unisim1} with mean and variance changes in Gaussian distributions.
}
\end{table}

As can be seen from Table \ref{table:meanSim} and Figure \ref{fig:meanSim}, PELT was fast and suffered little loss in accuracy in identifying change points in longer time series, as observed from the Rand and discrepancy (T2E and E2T) values. 
At $T=400,1600$ and $3200$, from the lower estimated number of change points and the higher values of T2E, we notice that e-cp3o and ks-cp3o did not always detect all the changes. But from the lower values of E2T, we see that the points which e-cp3o and ks-cp3o did identify as changes are amongst the closest to the true changes. At $T=6000$, e-cp3o and ks-cp3o performed comparably with the competing methods in terms of segmentation quality.

In the {\tt ecp} {\it R} package, we also provide a faster version of ks-cp3o which only computes the Kolmogorov-Smirnov statistic using points within a window of size $\delta$ around each candidate segmentation point. Its average runtime at $T=6000$ is 14.483$s$, but it detected a slightly lower average number of change points (2.620), and hence is excluded from the reporting.  

\subsection{Simulation 2}
\label{sec: unisim2}

Time series in this simulation study contain a change in distribution, mean, and tail index.
The data transitions from a exponential distribution $Exp\left(\frac{1}{3}\right)$ to a normal distribution $N(3,1)$ to a standard normal distribution $N(0,1)$.
The tail index change is caused by a final transition to a t-distribution with $2.01$ 
degrees of freedom.

We do not include PELT in the following experiments since it is a parametric method that detects only mean and variance changes.

We expect that all methods included will be able to easily detect the mean change and will have more difficulty detecting the change in tail index. Results for this set of simulations can be found in Figure \ref{fig:tailSim}. Runtimes are similar to those in Table \ref{table:meanSim} with PELT excluded.

At $T=400, 1600$ and $3200$, e-cp3o was not only significantly faster than all other procedures, but also managed to generate the best segmentations on average. While most procedures tended to miss the tail index change, e-cp3o detected the most number of change points with averages within $0.05$ of the true number $3$. e-cp3o had higher E2T since it picked out the third change point more often than the other methods, but the accuracy of detecting the third change was not as high as those for the first two changes. At $T=6000$, E-Divisive overtook in terms of segmentation quality, but e-cp3o was much faster and hence provided a better balance between speed and accuracy.

\subsection{Simulation 3}
\label{sec: unisim3}

The data transitions from a t-distribution $t_{0.1}$ to $t_{1.9}$ to a Cauchy distribution $Cauchy(-2,1)$ to $Cauchy(0,1)$. We use $\alpha=0.09$ instead since we need $\alpha<0.1$ for the moment assumptions of E-statistics to hold. Complete results are shown in Figure \ref{fig:uniSim}. Runtimes are similar to those in Table \ref{table:meanSim} with PELT excluded.

In the short time series setting ($T=400$), NPCP-F and ks-cp3o performed comparatively. In the long time series setting ($T=6000$), E-Divisive and ks-cp3o performed comparatively. In general, ks-cp3o had the most consistent performance by almost always achieving the highest Rand and lowest discrepancy values. In fact, ks-cp3o picked out the correct number of change points in every sample series from $T=1600$ onwards. It demonstrated great potential in change point detection in general datasets where commonly desired distributional properties cannot be assumed.

Due to the small value of $\alpha$ which makes the E-statistics smaller in magnitude and therefore more difficult to distinguish,  e-cp3o does not perform as well as the other methods. Moreover, it is not straightforward to determine the best $\alpha$ to use in practice, especially when extreme observations are present. Hence, it is important to select a goodness-of-fit that is appropriate for the data.

\section{Applications to Real Data}

\subsection{Temperature Anomalies}

We examine the HadCRUT4 dataset of \cite{morice12}. This dataset consists of monthly global temperature anomalies from 1850 to 
2014. We perform analysis on the land air temperature anomaly component from the tropical region, and apply change point procedures to the differenced data which visually appears to be piecewise stationary.

The e-cp3o and ks-cp3o procedures were applied with a minimum segment length of $5$ years, corresponding to $w=60$; a maximum of $K=5$ change points were fit. We chose default values $\alpha=1$ for e-cp3o. e-cp3o identified change points at April 1917 and April 1969, and ks-cp3o identified changes at February 1864, May 1878 and September 1898. These are shown in Figure 
\ref{fig:climateTS}.

\begin{figure}[h!]
    \centering
    \includegraphics[width=0.9\linewidth]{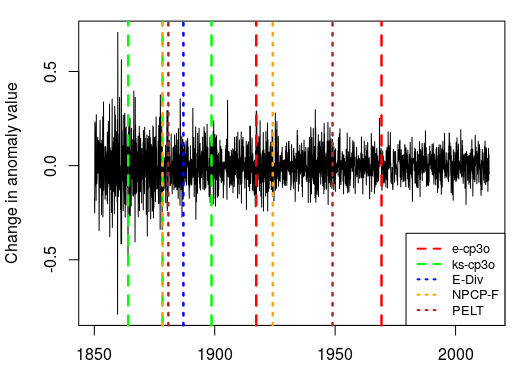}
    \caption{Change in land air temperature anomalies for the Tropical climate zone from February 1850 to December 2013. Estimated change point locations indicated by dashed vertical lines.}
    \label{fig:climateTS}
\end{figure}

The May 1878 change point may be a result of a large climate disruption in 1877-1878, which may be caused by a major El Ni\~{n}o episode. The April 1969 change point occurs around the United Nations Conference on the Human Environment. This conference, which was held in June 1972, 
focused on human interactions with the environment. 

e-cp3o used $3.659s$ and ks-cp3o used $376.138s$. With the same parameters, competing methods E-divisive, NPCP-F and PELT used $135.795s$, $111.377s$ and $0.078s$ respectively. Segmentation results vary and the true change points are unknown, which make it difficult to compare methods. However, we note that even though ks-cp3o took the longest time, it is the only method that identified the 1864 change point, which visually does look like a true change.

\subsection{Exchange Rates}

We apply e-cp3o to a set of spot foreign exchange (FX) rates obtained through the \texttt{R} package \texttt{Quandl} \citep{quandl13}, and compare results with multivariate methods E-divisive and NPCP-F. We consider the 3-dimensional time series consisting of monthly FX rates for Brazil (BRL), Russia (RUB), and Switzerland (CHF) against the United States (USD). The time 
horizon spanned is September 30, 1996 to February 28, 2014, which results in a total of 210 observations. We look at the change in the log rates, such that marginal processes appear to be piecewise stationary.\par

The e-cp3o procedure is applied with a minimum segment length of $12$ observations (a year), which corresponds to a value of $w=12$. Furthermore, we have chosen to fit at 
most $K=5$ change points, and default values of $\alpha=1$ is used. This specific choice of values resulted in change points being identified at December 1998, August 2002 and April 2008. These results are depicted in Figure \ref{fig:FXrates}.

\begin{figure*}[h!]
    \centering
    \begin{subfigure}[b]{.3\linewidth}
        \includegraphics[width=\linewidth]{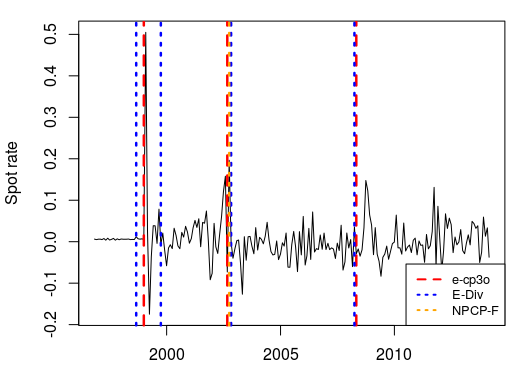}
        \caption{Brazil}
        \label{fig:braFX}
    \end{subfigure}
    \hspace{\fill}
    \begin{subfigure}[b]{.3\linewidth}
        \includegraphics[width=\linewidth]{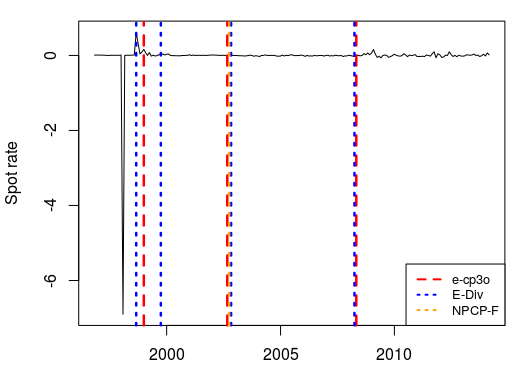}
        \caption{Russia}
        \label{fix:swsFX}
    \end{subfigure}
    \hspace{\fill}
	\begin{subfigure}[b]{.3\linewidth}
       \includegraphics[width=\linewidth]{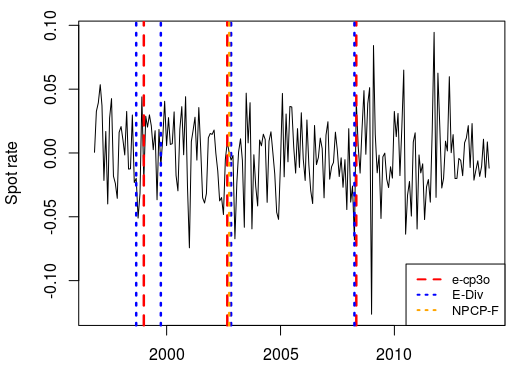}
       \caption{Switzerland}
       \label{fig:rusFX}
    \end{subfigure}
    \caption{Time series for FX spot rates for each of the three countries' currencies versus the USD. Estimated change point locations indicated by dashed vertical lines.}
    \label{fig:FXrates}
\end{figure*}

Changes in Russia's economic standing leading up to the 1998 ruble crisis may be the cause of the December 1998 change point. The August 2002 and April 2008 change points may be the results of the 2002 South American economic crisis and 2008 financial crisis respectively.

e-cp3o used $0.026s$, while E-Divisive and NPCP-F used $1.59s$ and $1.213s$ respectively using the same parameters. e-cp3o and E-Divisive found similar change points, but e-cp3o is much faster. NPCP-F seemed to have underestimated the number of change points and only identified one whereas the other methods identified three or more.

\section{Conclusion}

We have presented an approximate search procedure that incorporates pruning in order to reduce the amount of unnecessary calculations and to dramatically reduce computational costs. This search method 
can be used with almost any goodness-of-fit metric in order to identify change points in univariate and multivariate time series. In addition, this is accomplished 
without the user having to specify any sort of penalty parameter or function.

As the simulation studies demonstrate, the cp3o procedures do not uniformly record the best running time, average Rand values or average discrepancy values. However, when accuracy and computation time are viewed together across different data scenarios, the cp3o procedures are either better or comparable to almost all other competitors. Moreover, greater care in choosing a goodness-of-fit metric that is suitable to the data application is likely to improve performance further in terms of accuracy and/or speed. Hence, we would advocate the cp3o procedure as a general purpose change point procedure.

\section*{Acknowledgement}

Matteson was supported by an NSF CAREER award (DMS-1455172), a Xerox PARC Faculty Research Award, and the Cornell University Atkinson Center for a Sustainable Future (AVF-2017).


\bibliographystyle{IEEEtranN}
\bibliography{IEEEabrv,sample-bibliography}

\end{document}